\newtheorem{theorem}{Theorem}
\newtheorem{proof}{Proof}
\def\sinr {\mbox{\scriptsize\sf SINR}}
\newcommand{\PP}{\mathbb{P}}
\newcommand{\E}{\mathbb{E}}
\newcommand{\La}{\mathcal{L}}
\newcommand{\dd}{{\rm d}}
\newcommand{\R}{\mathbb{R}}
\DeclareMathOperator{\arccot}{arccot}
\begin{document}

\title{Outage Analysis of Full-Duplex Architectures in Cellular Networks}

\author{
\IEEEauthorblockN{Constantinos Psomas and Ioannis Krikidis}
\IEEEauthorblockA{Department of Electrical and Computer Engineering, University of Cyprus, Cyprus}
\IEEEauthorblockA{e-mail: \{psomas, krikidis\}@ucy.ac.cy}
\thanks{This work was supported by the Research Promotion Foundation, Cyprus under the project KOYLTOYRA/BP-NE/0613/04 ``Full-Duplex Radio: Modeling, Analysis and Design (FD-RD)''.}\vspace{-7mm}}

\maketitle

\begin{abstract}
The implementation of full-duplex (FD) radio in wireless communications is a potential approach for achieving higher spectral efficiency. A possible application is its employment in the next generation of cellular networks. However, the performance of large-scale FD multiuser networks is an area mostly unexplored. Most of the related work focuses on the performance analysis of small-scale networks or on loop interference cancellation schemes. In this paper, we derive the outage probability performance of large-scale FD cellular networks in the context of two architectures: two-node and three-node. We show how the performance is affected with respect to the model's parameters and provide a comparison between the two architectures.
\end{abstract}

\begin{IEEEkeywords}Full-duplex, cellular networks, stochastic geometry, Poisson point process, outage probability.\end{IEEEkeywords}

\section{Introduction}
The ever expanding world of wireless communications has motivated the need for new techniques to improve the utilization of the radio spectrum and to increase the spectral efficiency. This has been partly achieved by the use of orthogonal channel access schemes in conventional half-duplex (HD) systems. In this case, an HD wireless node transmits and receives information using orthogonal channels (e.g. frequency, time) for each operation. However, these orthogonality schemes lead to inefficient use of the system's bandwidth resources which in turn has prompted research for overcoming these limitations \cite{KRI1}. A potential solution to the HD constraints is full-duplex (FD) radio, as it allows a wireless node to simultaneously transmit and receive information at the same time and frequency. The main drawback of FD is the loop interference (LI) formed between the output and the input antennas which can be catastrophic to the system's efficiency and has been the primary reason why FD has been perceived as impractical so far. In spite of that, recent improvements in antenna technology and signal processing have helped mitigate this interference and, as a result, made FD feasible \cite{TR1}-\cite{AS3}. Indeed, FD has gained popularity recently and the literature list regarding this area has expanded significantly (see \cite{JSAC} and references therein).

In the context of wireless communication systems, FD has been studied mostly for simple topologies with a single user, and work for multiuser systems such as cellular and WiFi networks has been limited. In such networks, the simultaneous uplink and downlink operation at the same time and frequency creates multiuser interference which has a negative effect on the system's performance. The work in \cite{AS4} studies a three-node network with an FD base station and two HD mobile nodes and shows how a side-channel information can help reduce the effects of the interference from the uplink to the downlink node. In \cite{AS5}, an FD capable network with a multiple input multiple output (MIMO) base station serving multiple users is studied, where information theoretic interference management techniques achieve rate gains over an HD network. The network configuration of \cite{AS5} is also investigated in \cite{YIN} and it is shown that the network can also achieve rate gains by exploiting the degrees of freedom in a MIMO base station.

The aforementioned studies involve `static' small-scale scenarios in the sense that the distances between the nodes are not taken into account and they involve only one base station. To the authors' knowledge, `dynamic' large-scale FD networks have only been studied in \cite{XIE} and \cite{GOY}. The work in \cite{XIE} investigates a large-scale ad-hoc FD network and concludes that the large-scale factor in this case has a negative impact on the potential gains of the FD. On the other hand, benefits of FD are demonstrated in \cite{GOY}, where a large-scale FD cellular network based on the three-node architecture is studied. Tractable analytical expressions for the average per channel rate of both uplink and downlink are obtained using stochastic geometry, which show that the FD increases the aggregate throughput compared to the HD counterpart. Even though the results of \cite{GOY} are promising for the FD prospects, further investigation should be undertaken in terms of the outage probability of the system. Moreover, the two-node architecture, where both the base station and the user are FD capable, should also be taken into account. In this paper, we provide an analysis using stochastic geometry on the performance of the downlink in the two-node and the three-node FD architectures \cite{AS2}. The interference in each scenario acts differently on the downlink and so a comparison of the performances is provided together with the conventional HD one \cite{JA} to show the points at which each scenario overtakes the others. The rest of the paper is organized as follows: the next section sets forth the system model and its main assumptions. Section \ref{sec:outage} presents the analysis for the outage probability and Section \ref{sec:validation} presents a numerical validation and evaluation of the model. Finally Section \ref{sec:conclusion} provides some conclusive remarks.

\underline{Notation}: $\R^d$ denotes the $d$-dimensional Euclidean space, $b(x,r)$ denotes a two dimensional disk of radius $r$ centered at $x$, $\|x\|$ denotes the Euclidean norm of $x \in \R^d$, $N(A)$ represents the number of points in the area $A$, $\PP(X)$ denotes the probability of the event $X$ and $\E(X)$ represents the expected value of $X$.

\begin{figure}[t]
\begin{subfigure}{0.45\linewidth}
  \includegraphics[width=\linewidth]{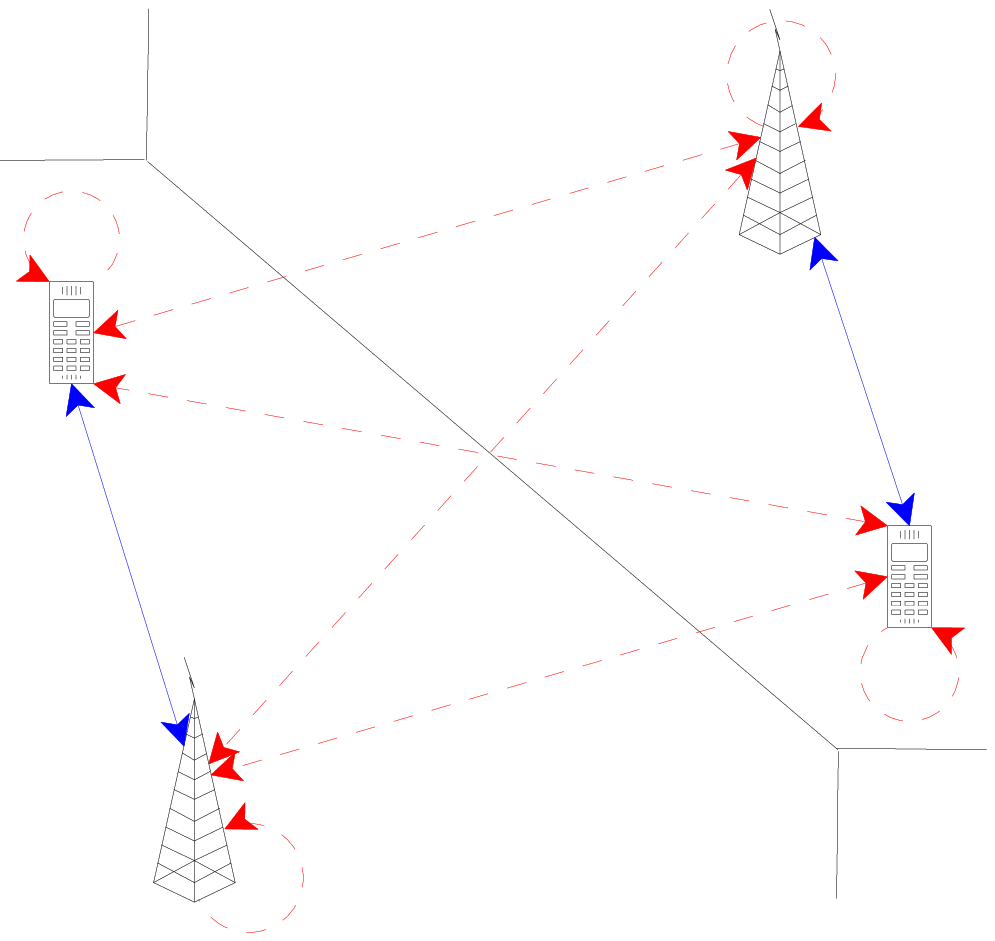}
  \caption{Two-node}
  \label{fig:sce1}
\end{subfigure}\hfill
\begin{subfigure}{0.45\linewidth}
  \includegraphics[width=\linewidth]{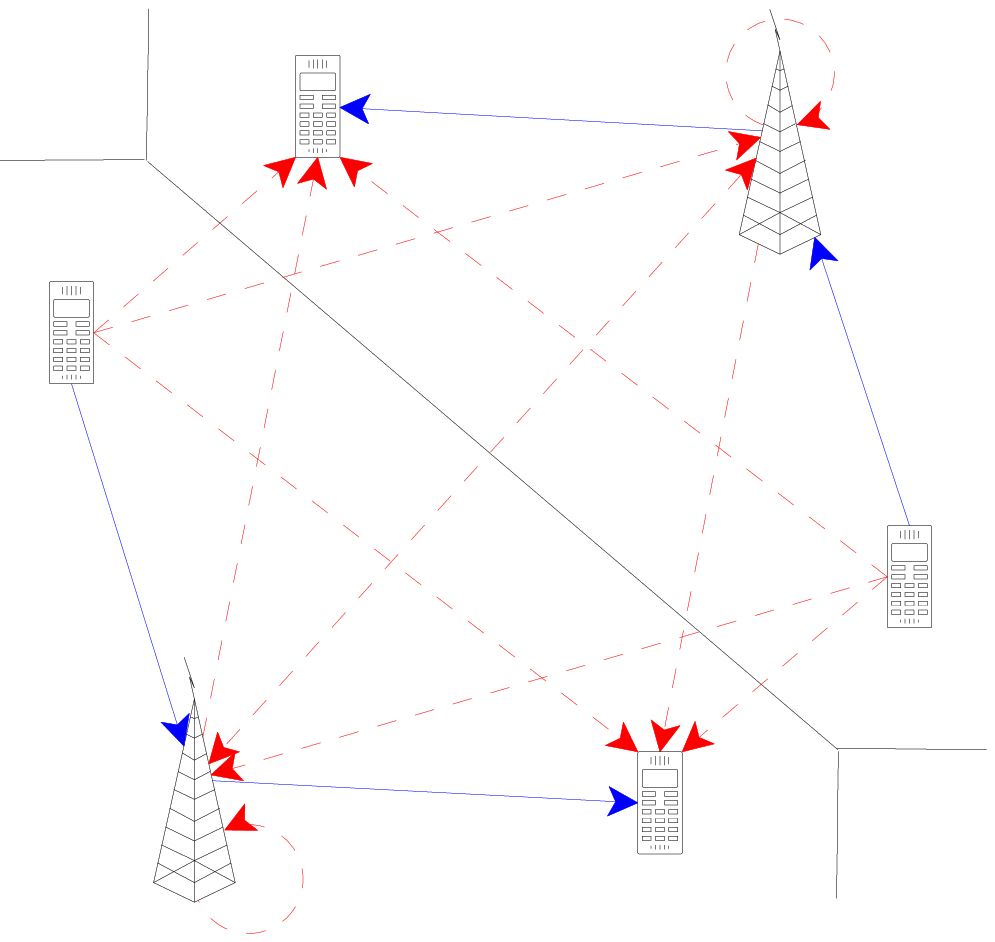}
  \caption{Three-node}
  \label{fig:sce2}
\end{subfigure}
\caption{Full-duplex architectures.}\label{fig:scenarios}
\vspace{-3mm}
\end{figure}

\section{System Model}\label{sec:model}
We consider an FD-capable cellular network and focus on the downlink performance. We take into account two different scenarios where FD can be employed: a two-node and a three-node architecture \cite{AS2}. Both scenarios are illustrated in Fig. \ref{fig:scenarios} where the active links are depicted with solid blue lines and the interference signals with dashed red lines. In the two-node FD scenario, both the base station (BS) and the user are FD capable and at any generic time a BS serves just one user for both uplink and downlink (Fig. \ref{fig:sce1}). In the three-node FD scenario, only the BS is FD capable and at any generic time a BS serves one HD uplink user and one HD downlink user (Fig. \ref{fig:sce2}). As the main focus of this work is the performance of the downlink user, we assume that in both scenarios the BS operates FD in a similar way. Both scenarios are modeled using spatial Poisson point processes (PPP) \cite{HAEN1}. For the sake of simplicity, the same notation is used for both models. Let the locations of the BSs to be distributed by a homogeneous PPP $\Phi_b = \{ x_i: i = 1,2,\dots\}$ of density $\lambda$ in the Euclidean plane $\R^2$, where $x_i \in \R^2$ denotes the location of the $i^{\rm th}$ BS. Similarly, let $\Phi_u$ be a homogeneous PPP of the same density $\lambda$ but independent of $\Phi_b$ to represent the locations of the users. We assume that all the BSs transmit with the same power $P_b$ and all the users with the same power $P_u$; both are equipped with a single transmit and a single receive antenna.

We assume that the channels are subject to both small-scale fading and large-scale path loss. Specifically, the fading between two nodes is Rayleigh distributed and so the power of the channel fading is an exponential random variable with mean $1/\mu$. The channel fadings are considered to be independent between them. The standard path loss model $\ell(x,y) = \|x-y\|^{-\alpha}$ is used which assumes that the received power decays with the distance between the transmitter $x$ and the receiver $y$, where $\alpha > 2$ denotes the path loss exponent. Throughout this paper, we will denote by $\alpha_1$ and $\alpha_2$ the path loss exponents for the channel between a BS and a user and for the channel between a pair of users respectively. In both scenarios, the interference at the downlink user is the sum of the received signals from the BSs of $\Phi_b$ and the uplink users of $\Phi_u$, excluding the received signal from the BS in the same cell. Note that in the two-node scenario the uplink and downlink operation is performed by the same user and so intra-cell interference does not exist. Nevertheless, in this case the user experiences LI. We assume that imperfect cancellation mechanisms of the LI are used \cite{TR3}, \cite{AS3} and the channel gain $h_l$ from the residual interference after cancellation can be characterized by $\E[\|h_l\|^2]=\sigma_l^2$ as each implementation of the cancellation mechanism can be characterized by a specific residual power \cite{KRI2}. Moreover, all wireless links exhibit additive white Gaussian noise (AWGN) with variance $\sigma_n^2$. A downlink user selects to connect to the BS transmitting the strongest signal power. Since all the BSs transmit with the same power, the user connects to the nearest BS in the plane. Assuming the user is located at the origin $o$ and at a distance $r$ to the nearest BS, the cumulative distribution function (cdf) of $r$ can be derived from the null probability of a 2D Poisson process \cite{HAEN2}, $\PP[r \leq R] = 1 - \PP[N(b(o,R))=0] = 1 - e^{-\lambda \pi R^2}.$
Therefore, the probability density function (pdf) of $r$ is,
\begin{equation}\label{eq:dist}
f_r(r) = 2\pi \lambda re^{-\lambda \pi r^2}, ~r \geq 0.
\end{equation}

\section{Outage Probability}\label{sec:outage}
In this section, we derive analytically the outage probability of a downlink cellular network for both scenarios outlined in Section \ref{sec:model}. The outage probability describes the probability that the instantaneous achievable rate of the channel is less than a fixed target rate $R$, i.e. $\PP[\log(1+\sinr) < R]$. Without loss of generality and following Slivnyak's Theorem \cite{STO}, we execute the analysis for a typical user $u_o$ located at the origin but the results hold for all downlink users in the network.

Assuming $u_o$ is at a random distance $r$ from the nearest BS, denoted by $b_o$, then the SINR of $u_o$ is,
\begin{equation}\label{eq:sinr}
\sinr = \frac{P_b h r^{-\alpha_1}}{\sigma_n^2 + I_l + I_b + I_u},
\end{equation}
where $I_l$ is the residual interference at $u_o$ after LI cancellation and is defined as $I_l = P_u h_l$, where $h_l$ is the LI channel gain at $u_o$; $I_b$ and $I_u$ is the interference received at $u_o$ from all the BSs (apart from $b_o$), and all the uplink users respectively. Specifically,
\begin{equation}\label{eq:inter}
I_b = P_b \sum_{i\in\Phi_b \setminus b_o} g_i d_i^{-\alpha_1}, \;\; 
I_u = P_u \sum_{j\in\Phi_u} k_j D_j^{-\alpha_2},
\end{equation}
where $h$, $g_i$, $k_j$ are the channel fadings between $u_o$ and $b_o$, $u_o$ and the $i^{\rm th}$ BS and $u_o$ and the $j^{\rm th}$ uplink user respectively; similarly, $d_i$ and $D_j$ are the distances between $u_o$ and the $i^{\rm th}$ BS and $u_o$ and the $j^{\rm th}$ uplink user respectively.

\begin{theorem}\label{thm:two}
The outage probability of a downlink user in the two-node FD scenario is $\Pi_2(R, \lambda, \alpha_1, \alpha_2) =$
\begin{equation}
1 - 2 \pi \lambda \int_0^\infty 
\frac{r e^{-\lambda \pi r^2-s\sigma_n^2}}{1+\frac{P_u}{P_b}\sigma_l^2 Tr^{\alpha_1}}
\La_{I_b}\left(s\right)
\La_{I_u}\left(s\right) \dd r,\label{eq:Thm1}
\end{equation}
where $s = \frac{\mu Tr^{\alpha_1}}{P_b}$, $T=2^R-1$,
\begin{equation}
\La_{I_b}\left(s\right) = \exp\left(-2\pi\lambda \int_r^\infty \left(\frac{T}{T + (\frac{x}{r})^{\alpha_1}} \right) x \dd x \right),
\end{equation}
and
\begin{align}\nonumber
&\La_{I_u}\left(s\right)
=\\& 2\pi \lambda\int_0^\infty \rho e^{-\lambda \pi \rho^2}
\exp\left(-2\pi\lambda \int_\rho^\infty \left(\frac{\frac{P_u}{P_b}T}{\frac{P_u}{P_b}T + \frac{y^{\alpha_2}}{r^{\alpha_1}}} \right) y \dd y \right)\dd \rho.
\end{align}
\end{theorem}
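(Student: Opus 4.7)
The plan is to evaluate $\Pi_2 = \PP[\log(1+\sinr) < R] = \PP[\sinr < T]$ with $T = 2^R - 1$ by the standard Rayleigh--exponential trick followed by factoring across independent interference sources. First I would condition on the serving-BS distance $r$ (with pdf given by (1)). Because $h$ is exponential with parameter $\mu$, the event $\{\sinr \ge T\}$ becomes $\{h \ge \mu^{-1} s(\sigma_n^2 + I_l + I_b + I_u)\}$ with $s = \mu T r^{\alpha_1}/P_b$, so that
\begin{equation*}
\PP[\sinr \ge T \mid r] = e^{-s\sigma_n^2}\, \La_{I_l}(s)\, \La_{I_b}(s)\, \La_{I_u}(s),
\end{equation*}
using that the three interference contributions are mutually independent. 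Integrating over $r$ against (1) and subtracting from $1$ produces the outer $\int_0^\infty 2\pi\lambda r e^{-\lambda\pi r^2}\,\cdots\, \dd r$ envelope appearing in the theorem.

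The three Laplace transforms are then handled separately. For $\La_{I_l}(s)$, since $I_l = P_u h_l$ with $h_l$ exponential of mean $\sigma_l^2$, a one-line integration gives $(1+sP_u\sigma_l^2)^{-1}$, which is exactly the rational factor inside the $r$-integral once $s$ is substituted. For $\La_{I_b}(s)$, Slivnyak's theorem identifies $\Phi_b \setminus \{b_o\}$ (given $b_o$ at distance $r$) with a homogeneous PPP of density $\lambda$ on $\R^2 \setminus b(o,r)$; applying the PGFL of the PPP together with $\E_g[e^{-sP_b g x^{-\alpha_1}}] = \mu/(\mu + sP_b x^{-\alpha_1})$ and the identity $sP_b/\mu = Tr^{\alpha_1}$ yields the displayed integral-in-the-exponent form. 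For $\La_{I_u}(s)$, the uplink interferers lie in the independent PPP $\Phi_u$ of density $\lambda$; letting $\rho$ denote the distance from the origin to the nearest point of $\Phi_u$ (whose pdf is again of the form (1), by the 2D PPP void probability) and applying the PGFL to the points in $\R^2 \setminus b(o,\rho)$ with the analogous exponential-fading calculation for $k$ at path-loss exponent $\alpha_2$ produces the inner exponential; averaging over $\rho$ supplies the outer integration.

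The main obstacle I anticipate is keeping the Palm-conditioning and independence bookkeeping clean. Specifically, one must verify that (i) $I_l$, $I_b$, and $I_u$ remain jointly independent after conditioning on $r$ (which follows from the independence of $\Phi_b$, $\Phi_u$, $h_l$, and all the small-scale fadings), and (ii) the Slivnyak reductions affect each point process only by carving out the correct exclusion ball, namely $b(o,r)$ for $\Phi_b$ and $b(o,\rho)$ for $\Phi_u$, while leaving the other process untouched. Once these reductions are in place, what remains are two essentially identical PGFL computations, differing only in the path-loss exponent ($\alpha_1$ vs.\ $\alpha_2$) and in the constant ($Tr^{\alpha_1}$ vs.\ $\frac{P_u}{P_b}Tr^{\alpha_1}$) that gets absorbed into the integration kernel.
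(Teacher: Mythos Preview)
Your proposal follows essentially the same route as the paper's proof: condition on $r$, exploit $h\sim\exp(\mu)$ to factor the coverage probability into $e^{-s\sigma_n^2}$ times three independent Laplace transforms, evaluate $\La_{I_l}$ via the exponential MGF, and obtain $\La_{I_b}$ and $\La_{I_u}$ through the PPP PGFL after carving out the appropriate exclusion balls. The only nuance is that the paper motivates the exclusion ball $b(o,\rho)$ for $\Phi_u$ by the \emph{absence of intra-cell interference} in the two-node architecture (so the nearest interfering uplink user lies in a neighbouring cell and, with one user per cell on average, $\rho$ inherits the nearest-neighbour law \eqref{eq:dist}), rather than by a Slivnyak/Palm reduction on $\Phi_u$ itself.
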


\begin{proof}
Starting from the definition of the outage probability and conditioning on the nearest BS being at a distance $r$ we have,
\begin{align*}
&\Pi_2(R, \lambda, \alpha_1, \alpha_2) =
\E_r \left[ \PP[\log(1+\sinr) < R\ |\ r] \right]\\
&=\int_0^\infty \PP[\log(1+\sinr) < R\ |\ r]f_r(r)\dd r\\
&=1-2\pi \lambda\int_0^\infty \PP[\sinr \geq 2^R-1\ |\ r] \;re^{-\lambda \pi r^2}\dd r.
\end{align*}
Letting $T=2^R-1$, $\PP[\sinr \geq T\ |\ r]$ is the coverage probability conditioned on the distance $r$ and is given by,
\begin{align*}
\PP[\sinr \geq T\ |\ r] &=\PP\left[h \geq \frac{Tr^{\alpha_1}}{P_b}(\sigma_n^2 + I_l + I_b + I_u)\ \Big|\ r\right]\\
&\stackrel{(a)}{=} \E\left[e^{-\frac{\mu Tr^{\alpha_1}}{P_b}(\sigma_N^2 + I_l + I_b + I_u)}\ \Big|\ r\right]\\
&= e^{-s\sigma_n^2}\E_{I_l}\left[e^{-sI_l}\right]
\E_{I_b}\left[e^{-sI_b}\right]\E_{I_u}\left[e^{-sI_u}\right]\\
&\stackrel{(b)}{=} \frac{e^{-s\sigma_n^2}}{1+\sigma_l^2 Tr^{\alpha_1}\frac{P_u}{P_b}} \La_{I_b}(s) \La_{I_u}(s),
\end{align*}
where $s = \frac{\mu Tr^{\alpha_1}}{P_b}$; $(a)$ follows from the fact that $h \sim \exp(\mu)$; $(b)$ follows from the moment generating function (MGF) of an exponential variable and since $h_l \sim \exp(1/\sigma^2_l)$; $\La_{I_b}(s)$ and $\La_{I_u}(s)$ are the Laplace transforms of the random variables $I_b$ and $I_u$ respectively, evaluated at $s$. As there is no intra-cell interference, $I_u$ needs to be evaluated conditioned on the distance $\rho$ from $u_o$ to the closest uplink user in the neighboring cells. Since the densities of $\Phi_b$ and $\Phi_u$ are equal, we can assume that there is on average one user per cell. Therefore, $\rho$ is distributed according to \eqref{eq:dist} and the Laplace transform of $I_u$ is given by,
\begin{equation}\label{eq:laplace_I_u}
\La_{I_u}(s) = \E_{I_u}[e^{-sI_u}\ |\ \rho] = \int_0^\infty \E_{I_u}[e^{-sI_u}]f_{\rho}(\rho)\dd \rho.
\end{equation}
The expected value is then evaluated as follows,
\begin{align}
\nonumber& \E_{I_u}[e^{-sI_u}] = \E_{\Phi_u, k_j}\left[\exp(-sP_u\sum_{j\in \Phi_u} k_j D_j^{-\alpha_2})\right]\\
\nonumber&= \E_{\Phi_u, k_j}\left[\prod_{j\in \Phi_u} \exp(-sP_u k_j D_j^{-\alpha_2})\right]\\
\nonumber&\stackrel{(a)}{=}\E_{\Phi_u}\left[\prod_{j\in \Phi_u} \E_{k} [\exp(-sP_u k D_j^{-\alpha_2})] \right]\\
\nonumber&\stackrel{(b)}{=} \exp\left(-2\pi\lambda \int_\rho^\infty \left(1-\E_{k}[\exp(-sP_u k y^{-\alpha_2})] \right) y \dd y \right)\\
&\stackrel{(c)}{=} \exp\left(-2\pi\lambda \int_\rho^\infty \left(1 - \frac{\mu}{\mu + s P_u y^{-\alpha_2}} \right) y \dd y \right),\label{eq:exp_I_u}
\end{align}
where $(a)$ follows from the fact that $k_j$ are independent and identically distributed and also independent from the point process $\Phi_u$; $(b)$ follows from the probability generating functional (PGFL) of a PPP \cite{STO} and the limits are from $\rho$ to $\infty$ since the closest interfering uplink user is at least at a distance $\rho$; $(c)$ follows from the MGF of an exponential random variable and since $k \sim \exp(\mu)$.

\noindent Replacing $\E_{I_u}[e^{-sI_u}]$ with \eqref{eq:exp_I_u} and $s$ with $\frac{\mu Tr^{\alpha_1}}{P_b}$ in \eqref{eq:laplace_I_u} gives,\vspace{0.2cm}
\begin{align*}
&\La_{I_u}\left(\frac{\mu Tr^{\alpha_1}}{P_b}\right) = 
\int_0^\infty \E_{I_u}\left[\exp\left(-\frac{\mu Tr^{\alpha_1}I_u}{P_b}\right)\right]f_{\rho}(\rho) \dd \rho =\\[0.2cm]
&\int_0^\infty \exp\left(-2\pi\lambda \int_\rho^\infty \left(1 - \frac{1}{1 + \frac{P_u}{P_b}\frac{r^{\alpha_1}}{y^{\alpha_2}}T} \right) y \dd y \right) f_{\rho}(\rho) \dd \rho =\\[0.2cm]
&2\pi \lambda\int_0^\infty \rho e^{-\lambda \pi \rho^2}
\exp\left(-2\pi\lambda \int_\rho^\infty \left(\frac{\frac{P_u}{P_b}T}{\frac{P_u}{P_b}T + \frac{y^{\alpha_2}}{r^{\alpha_1}}} \right) y \dd y \right)\dd \rho.
\end{align*}
Similarly as above,\vspace{0.2cm}
\[{\cal L}_{I_b}\left(\frac{\mu Tr^{\alpha_1}}{P_b}\right) = \exp\left(-2\pi\lambda \int_r^\infty \left(\frac{T}{T + (\frac{x}{r})^{\alpha_1}} \right) x \dd x \right),\]
and the result follows.
\end{proof}

The main difference between the two architectures is that in the three-node case the downlink user is not subject to any LI. Despite that, the downlink user is subject to intra-cell interference from the uplink users. Therefore, the SINR of $u_o$ at a random distance $r$ from $b_o$ in the three-node FD scenario is the same as \eqref{eq:sinr} but with $I_l = 0$.

\begin{theorem}\label{thm:three}
The outage probability of a downlink user in the three-node FD scenario is $\Pi_3(R, \lambda, \alpha_1, \alpha_2) = $
\begin{align}
1 - 2 \pi \lambda \int_0^\infty r e^{-\lambda \pi r^2-s\sigma_n^2}
\La_{I_b}\left(s\right) \La_{I_u}\left(s\right) \dd r,\label{eq:Thm2}
\end{align}
where $s = \frac{\mu Tr^{\alpha_1}}{P_b}$, $T=2^R-1$,
\begin{equation}
\La_{I_b}\left(s\right) = \exp\left(-2\pi\lambda \int_r^\infty \left(\frac{T}{T + (\frac{x}{r})^{\alpha_1}} \right) x \dd x \right),
\end{equation}
and
\begin{equation}
\La_{I_u}\left(s\right) = \exp\left(-2\pi\lambda \int_0^\infty \left(\frac{\frac{P_u}{P_b}T}{\frac{P_u}{P_b}T + \frac{y^{\alpha_2}}{r^{\alpha_1}}} \right) y \dd y \right).
\end{equation}
\end{theorem}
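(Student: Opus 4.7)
The plan is to follow the same overall skeleton as the proof of Theorem \ref{thm:two}, adapting only the ingredients that differ. I would start from $\Pi_3 = \E_r[\PP(\log(1+\sinr) < R\mid r)]$ and, after substituting the pdf $f_r$ from \eqref{eq:dist}, reduce it to $1 - 2\pi\lambda\int_0^\infty \PP[\sinr \geq T\mid r]\, r e^{-\lambda\pi r^2}\,\dd r$ exactly as in the two-node proof.

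Next, because $I_l = 0$ in the three-node scenario, the coverage probability collapses to $\PP[h \geq (Tr^{\alpha_1}/P_b)(\sigma_n^2 + I_b + I_u)\mid r] = e^{-s\sigma_n^2}\La_{I_b}(s)\La_{I_u}(s)$ with $s = \mu Tr^{\alpha_1}/P_b$, by exploiting $h\sim\exp(\mu)$ and the mutual independence of $I_b$ and $I_u$. The factor $\La_{I_b}(s)$ is identical to its counterpart in Theorem \ref{thm:two}: the point process $\Phi_b$ and the serving-cell exclusion distance $r$ are unchanged across the two scenarios, so I would simply reuse that derivation verbatim.

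The only genuinely new step is $\La_{I_u}(s)$. The key distinction from the two-node case is that intra-cell interference now exists, because the uplink user in the serving cell is physically distinct from $u_o$. The interfering uplink PPP therefore has no exclusion disk around the origin, so when applying the PGFL of $\Phi_u$ the integration in polar coordinates runs from $0$ to $\infty$ rather than from $\rho$ to $\infty$. Reapplying the same manipulations used in \eqref{eq:exp_I_u}, namely the PGFL of a PPP followed by the MGF of $k\sim\exp(\mu)$, one obtains the stated $\La_{I_u}(s)$ directly, with no conditioning on an auxiliary nearest-interferer distance and no outer averaging over $\rho$.

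The main obstacle is conceptual rather than computational: one must justify cleanly why the lower integration limit becomes $0$ here. The argument is that modelling uplink users as an independent homogeneous PPP with no cell-wise removal allows the nearest interfering uplink user to lie arbitrarily close to $u_o$, in contrast to the two-node case where the self-identification of uplink and downlink at the same node forced a nearest-neighbour exclusion of radius $\rho$ distributed per \eqref{eq:dist}. Finally, one should note in passing that the resulting improper integral converges thanks to $\alpha_2 > 2$, mirroring the standard PPP interference-integral convergence already used for $\La_{I_b}$.
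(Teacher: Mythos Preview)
Your proposal is correct and follows precisely the approach the paper intends: the paper itself states only that the proof is similar to that of Theorem~\ref{thm:two} and omits the details, and your adaptation---dropping the $I_l$ factor, reusing $\La_{I_b}$ verbatim, and replacing the $\rho$-conditioned lower limit by $0$ in the PGFL integral for $\La_{I_u}$ because the intra-cell uplink user is now a distinct interferer---is exactly the required modification. Your justification for the change in integration limit is the correct conceptual point, and the convergence remark is a welcome addition.
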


\begin{proof}
The proof is similar to the proof of Theorem \ref{thm:two} and so it is omitted due to space limitations.
\end{proof}

The derived expressions \eqref{eq:Thm1} and \eqref{eq:Thm2} provide a general result for the outage probability for each scenario under the main assumption that the interference is Rayleigh. Since these general expressions are not closed-form, a special case is considered which facilitates their simplification. Specifically, let $\alpha_1 = 4$ and $\alpha_2 = 4$. Furthermore, assume that the BSs and the users transmit with the same power, i.e. $P_b = P_u$, and that the network is interference-limited, i.e. $\sigma_n^2 = 0$. By using a series of transformations $w \rightarrow \frac{1}{\sqrt{T}}(\frac{x}{r})^2$, $z \rightarrow \frac{1}{\sqrt{T}}(\frac{y}{r})^2$, $u \rightarrow r^2$ and $v \rightarrow \rho^2$, the outage probability for the two-node scenario becomes,
\begin{equation}\label{eq:special}
\Pi_2(R, \lambda, 4, 4) = 1 - (\pi \lambda)^2 \int_0^\infty 
\frac{F(u,R)F(u,v,R)}{1+\sigma_l^2 Tu^2}\dd u,
\end{equation}
where
\begin{equation}
F(u,R) = \exp\left[-\pi\lambda u\left(1 + \sqrt{T} \arctan(\sqrt{T})\right) \right],
\end{equation}
and
\begin{align}\nonumber
&F(u,v,R)\\
&= \int_0^\infty \exp\left[-\pi \lambda\left(v + u \sqrt{T} \arccot\left(\frac{v}{u\sqrt{T}}\right)\right)\right]\dd v.
\end{align}\smallskip

Likewise, the outage probability for the three-node scenario can be simplified significantly to,
\begin{equation}
\Pi_3(R, \lambda, 4, 4) = 1 - \frac{1}{1+\sqrt{T}(\arctan(\sqrt{T})+\frac{\pi}{2})}.
\end{equation}
Note that $\Pi_3$ is independent from the network density $\lambda$ and only depends on the target rate $R$. The same applies for $\Pi_2$ when $\sigma_l^2=0$ even though it is not as obvious as for $\Pi_3$.

\section{Numerical Results}\label{sec:validation}
In this section, the proposed analytical model is validated and evaluated with computer simulations. Unless otherwise stated, the simulations use the following parameters: $\lambda = 10^{-3}$, $\alpha_1 = 4$, $\alpha_2 = 4$ and $P_b = P_u$. We compare the two FD models with the HD model in \cite{JA} which is similarly derived using stochastic geometry. For a fair comparison, we assume an RF-chain conserved framework \cite{AS6} where the HD and FD nodes use the same number of RF-chains and set the instantaneous achievable rate of the HD model to $\frac{1}{2}\log(1+\sinr)$ to accommodate the fact that the HD's instantaneous rate is half the one of the FD's due to the latter's simultaneous transmit/receive operation. Throughout this section, we will refer to the two-node scenario with no residual LI, i.e. $\sigma_l^2 = 0$, as optimal.

\begin{figure}[t]
  \centering
  \includegraphics[width=\linewidth]{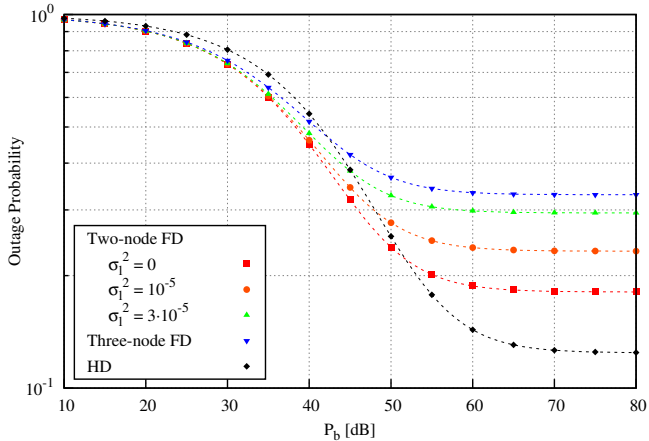}
  \caption{Outage Probability versus $P_b$; $\sigma_n^2=1$ and $R = 0.1$ bpcu. Analytical results are shown with dashed lines.}
  \label{fig:power}
\end{figure}

\begin{figure}[t]
  \includegraphics[width=\linewidth]{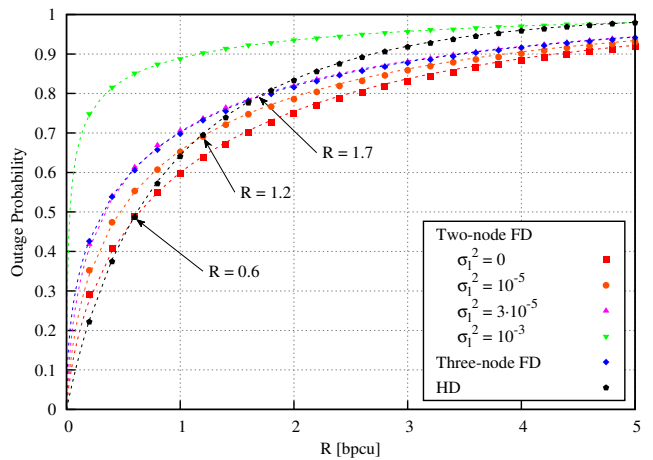}
  \caption{Outage Probability versus target rate $R$; $\sigma_n^2 = 0$ and $P_b = P_u$. Analytical results are shown with dashed lines.}
  \label{fig:bpcu}
\end{figure}

Fig. \ref{fig:power} shows the outage probability of each respective scenario in terms of the BSs' transmission power $P_b$. It is clear from the plot that the outage probability converges to a constant floor in all cases for high transmission powers. This is due to the fact that as the transmission power of the network's nodes increases, the noise in the network becomes negligible. The FD networks perform slightly better to the HD network since they can achieve twice the rate of the HD network. Nevertheless, the HD network suffers the least in terms of multiuser interference and therefore it performs significantly better for high transmission powers, whereas the performance of the FD networks is degraded by the interference. The three-node FD performs the worst for which shows the major impact the intra-cell interference has on the downlink. Indeed, the intra-cell interference starts to impact the performance at intermediate values, and as a result the outage converges faster than the other scenarios. On the other hand, the optimal two-node FD performs better than the three-node FD but its performance degrades as the residual LI increases. Since cellular networks are generally designed to be interference-limited, we will consider the case $\sigma_n^2 = 0$ for the rest of this section.

In Fig. \ref{fig:bpcu} the outage probability is depicted with respect to the target rate $R$. As expected, the performance for all scenarios degrades as the target rate increases. For low target rates, specifically up to $R = 0.6$ bits per channel use (bpcu), HD has the best performance. At $R = 0.6$ bpcu the performance of the HD network is equal to the one of the optimal two-node FD and at $R =1.7$ bpcu it is equal to the one of the three-node FD. This behavior is expected and is due to the fact that the FD can achieve twice the achievable rate at any instant but due to the multiuser interference and LI it achieves a better performance at higher target rates. Indeed, it is obvious that the residual LI has a critical impact on the performance since for the case $\sigma^2_l=10^{-3}$ the outage probability reaches 80\% for $R \approx 0.5$ bpcu. This is also clear from Fig. \ref{fig:loop}, which shows the negative impact of the residual LI on the performance of the network for different target rates. Moreover, as the target rate increases the performance drops with a faster pace.

\begin{figure}[t]
  \centering
  \includegraphics[width=\linewidth]{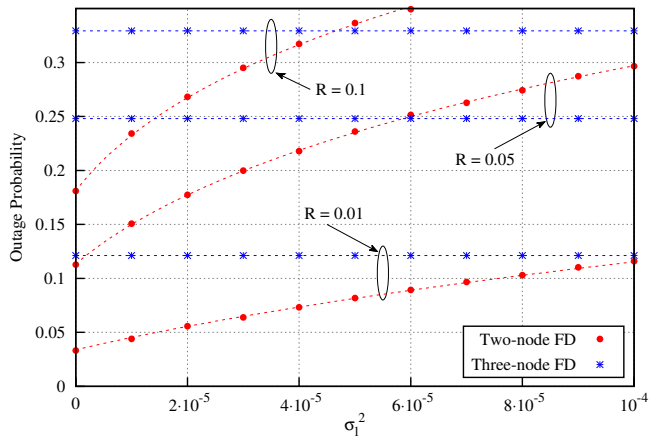}
  \caption{Outage Probability versus $\sigma_l^2$; $\sigma_n^2 = 0$ and $P_b = P_u$. Analytical results are shown with dashed lines.}
  \label{fig:loop}
\end{figure}

\begin{figure}[t]
  \centering
  \includegraphics[width=\linewidth]{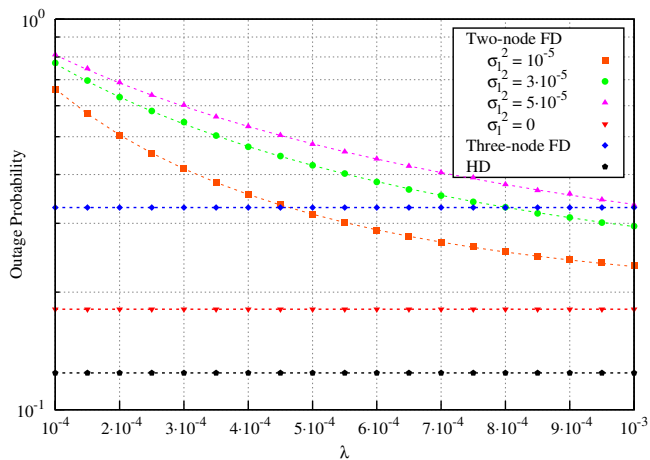}
  \caption{Outage Probability versus network density $\lambda$; $R = 0.1$ bpcu and $\sigma_n^2 = 0$. Analytical results are shown with dashed lines.}
  \label{fig:lambda}
\end{figure}
  
The impact of the network density $\lambda$ on the performance of the network is illustrated in Fig. \ref{fig:lambda}. The main observation here is that the outage probability for the optimal two-node FD, the three-node FD and the HD scenarios is independent of $\lambda$. We can explain this behavior as follows. Even though a larger (smaller) network density results to more (less) multiuser interference to the downlink user, it also entails that the user is closer (further) to its serving BS. This trade-off leads to the average performance to remain constant. However, when the user is subject to residual LI in the two-node FD scenario, the outage does depend on $\lambda$ and the denser the network, the lower the outage probability is. In this case, the residual LI dominates the SINR at the downlink user so the denser the network, the closer the user is to the BS thus reducing the negative effects of the residual LI. It is evident from Fig. \ref{fig:lambda} that the higher the power of the residual LI is, the denser the network needs to be in order to achieve the performance of the three-node FD scenario.

\section{Conclusion}\label{sec:conclusion}
This paper has presented analytical expressions for the outage probability of two fundamental FD architectures in cellular networks: two-node and three-node. A detailed performance comparison has been provided between these two architectures and their HD counterpart. Our results show that for low values of LI residual the two-node performs better than the three-node due to the latter's high multiuser interference. On the other hand, for large values of residual LI, the three-node becomes more practical. As expected, the HD mode performs better than the FD mode for low threshold values $R$. Therefore, even though both FD architectures have potential gains, to achieve these the multiuser interference and LI need to be reduced significantly. A future extension of this work is to consider the case where the nodes employ directional antennas which could reduce the multiuser interference and passively suppress the LI, thus improving the performance of both architectures.


\begin{thebibliography}{}
\bibitem{KRI1} Z. Ding, I. Krikidis, B. Rong, J.S. Thompson, C. Wang, and S. Yang, ``On combating the half-duplex constraint in modern cooperative networks: protocols and techniques," \emph{IEEE Wireless Commun.}, vol. 19, pp. 20--27, Dec. 2012.

\bibitem{TR1} T. Riihonen, S. Werner, and R. Wichman, ``Comparison of full-duplex and half-duplex modes with a fixed amplify-and-forward relay," in \emph{Proc. IEEE Wireless Commun. Netw. Conf.}, Budapest, Hungary, April 2009, pp. 1--5

\bibitem{TR2} T. Riihonen, S. Werner, and R. Wichman, ``Optimized gain control for single-frequency relaying with loop interference," \emph{IEEE Trans. Wireless Commun.}, vol. 8, pp. 2801--2806, June 2009.

\bibitem{TR3} T. Riihonen, S. Werner, and R. Wichman, ``Mitigation of Loopback Self-Interference in Full-Duplex MIMO Relays," \emph{IEEE Trans. Signal Process.}, vol. 59, pp. 5983--5993, Dec. 2011.

\bibitem{AS1} A. Sahai, G. Patel, and A. Sabharwal, ``Pushing the limits of full-duplex: design and real-time implementation," Rice University Technical Report TREE1104, June 2011. Available: http://arxiv.org/abs/1107.0607

\bibitem{AS2} E. Everett, A. Sahai, and A. Sabharwal, ``Passive Self-Interference Suppression for Full-Duplex Infrastructure Nodes," \emph{IEEE Trans. Wireless Commun.}, vol. 13, pp. 680--694, Feb. 2014.

\bibitem{AS3} M. Duarte and A. Sabharwal, ``Full-duplex wireless communications using off-the-shelf radios: Feasibility and first results," in \emph{Proc. Asilomar Conf. Signals, Systems and Computers}, Pacific Grove, CA, Nov. 2010, pp. 1558--1562.

\bibitem{JSAC} A. Sabharwal, P. Schniter, D. Guo, D. W. Bliss, S. Rangarajan, and R. Wichman, ``In-band full-duplex wireless: challenges and opportunities," \emph{IEEE JSAC Special Issue on Full-duplex Wireless Networks}, vol. 32, pp. 1637--1652, Sept. 2014.

\bibitem{AS4} J. Bai and A. Sabharwal, ``Distributed Full-Duplex via Wireless Side-Channels: Bounds and Protocols," \emph{IEEE Trans. Wireless Commun.}, vol. 12, pp. 4162--4173, Aug. 2013.

\bibitem{AS5} A. Sahai, S. Diggavi, and A. Sabharwal, ``On uplink/downlink full-duplex networks," in \emph{Proc. Asilomar Conf. Signals, Systems and Computers}, Pacific Grove, CA, Nov. 2013, pp. 14--18.

\bibitem{YIN} B. Yin, M. Wu, C. Studer, J. R. Cavallaro, and J. Lilleberg, ``Full-duplex in large-scale wireless systems," in \emph{Proc. Asilomar Conf. Signals, Systems and Computers}, Pacific Grove, CA, Nov. 2013, pp. 1623--1627.

\bibitem{XIE} X. Xie and X. Zhang, ``Does Full-Duplex Double the Capacity of Wireless Networks?," in \emph{Proc. IEEE International Conference on Computer Communications}, Toronto, ON, Canada, April 2014, pp. 253--261.

\bibitem{GOY} S. Goyal, P. Liu, S. Hua, and S. S. Panwar, ``Analyzing a full-duplex cellular system," in \emph{Proc. Annual Conference in Information Sciences and Systems}, Baltimore, MD, USA, March 2013, pp. 1--6.

\bibitem{JA} J. G. Andrews, F. Baccelli, and R. K. Ganti, ``A Tractable Approach to Coverage and Rate in Cellular Networks," \emph{IEEE Trans. Commun.}, vol. 59, pp. 3122--3134, Nov. 2011.

\bibitem{HAEN1} M. Haenggi, \emph{Stochastic geometry for wireless networks}, Cambridge Univ. Press, 2013.

\bibitem{KRI2} H. A. Suraweera, I. Krikidis, G. Zheng, C. Yuen, and P. J. Smith, ``Low-complexity end-to-end performance optimization in MIMO full-duplex relay systems," \emph{IEEE Trans. Wireless Commun.}, vol. 13, pp. 913--927, Jan. 2014.

\bibitem{HAEN2} M. Haenggi, ``On distances in uniformly random networks," \emph{IEEE Trans. Inform. Theory}, vol. 51, pp. 3584--3586, Oct. 2005.

\bibitem{STO} S. N. Chiu, D. Stoyan, W. S. Kendall, and J. Mecke, \emph{Stochastic geometry and its Applications}, 3rd Edition, John Wiley and Sons, 2013.

\bibitem{AS6} V. Aggarwal, M. Duarte, A. Sabharwal, and N. K. Shankaranarayanan, ``Full- or half-duplex? A capacity analysis with bounded radio resources," in \emph{Proc. Information Theory Workshop}, Lausanne, Switzerland, Sept. 2012, pp. 207--211.
\end{thebibliography}
\end{document}